\def\algspacing{\alg@unmargin}
\newtheorem{thm}{Theorem}
\newtheorem{lemma}[thm]{Lemma}
\newtheorem{prop}[thm]{Proposition}
\newtheorem{defn}[thm]{Definition}
\newtheorem{fact}[thm]{Fact}
\theoremstyle{remark}
\newtheorem{rem}[thm]{Remark}
\DeclareMathAlphabet{\mathsfsl}{OT1}{cmss}{m}{sl}
\newcommand{\term}{\emph}
\renewcommand{\phi}{\varphi}
\newcommand{\grad}{\nabla}
\newcommand{\mtx}[1]{\bm{#1}}
\newcommand{\trace}{\operatorname{tr}}
\newcommand{\psdgt}{\succ}
\newcommand{\ip}[2]{\left\langle {#1},\ {#2} \right\rangle}
\begin{document}
\title[Joint Convexity of Relative Entropy implies Lieb Concavity]
{From joint convexity of quantum relative entropy \\ 
to a concavity theorem of Lieb}

\author{Joel~A.~Tropp}

%\keywords{???}

%\thanks{2010 {\it Mathematics Subject Classification}.
%Primary:
%60B20. %(probabilistic aspects of random matrices).
%}

%\thanks{JAT is with Applied and Computational Mathematics, MC 305-16, California Inst.~Technology, Pasadena, CA 91125.
%E-mail: \url{jtropp@acm.caltech.edu}.
%Research supported by ONR award N00014-08-1-0883, DARPA award N66001-08-1-2065, and AFOSR award FA9550-09-1-0643.}

\date{27 December 2010. Revised 4 January 2011.}

\begin{abstract}
This note provides a succinct proof of a 1973 theorem of Lieb that establishes the concavity of a certain trace function.  The development relies on a deep result from quantum information theory, the joint convexity of quantum relative entropy, as well as a recent argument due to Carlen and Lieb.
\end{abstract}

\maketitle

\section{Introduction}
In his 1973 paper on trace functions, Lieb establishes an important concavity theorem~\cite[Thm.~6]{Lie73:Convex-Trace} concerning the trace exponential. 

\begin{thm}[Lieb] \label{thm:lieb}
Let $\mtx{H}$ be a fixed self-adjoint matrix.  The map
\begin{equation} \label{eqn:trace-fn}
\mtx{A} \longmapsto \trace \exp\left( \mtx{H} + \log \mtx{A} \right)
\end{equation}
is concave on the positive-definite cone.
\end{thm}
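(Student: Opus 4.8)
The plan is to realize the trace function~\eqref{eqn:trace-fn} as the partial maximum of a jointly concave function, so that concavity follows from the elementary fact that maximizing a jointly concave function over one of its arguments yields a concave function of the remaining argument. The deep input will be the \emph{joint convexity of quantum relative entropy} (Lindblad): writing
\[
D(\mtx{T} \,\|\, \mtx{A}) \defby \trace\bigl( \mtx{T} \log \mtx{T} - \mtx{T} \log \mtx{A} \bigr),
\]
the map $(\mtx{T}, \mtx{A}) \mapsto D(\mtx{T} \,\|\, \mtx{A})$ is jointly convex on pairs of positive-definite matrices.

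First I would establish a variational formula for the trace exponential. For any self-adjoint $\mtx{K}$,
\[
\trace \exp(\mtx{K}) = \max_{\mtx{T} \succ 0}\, \Bigl[ \trace(\mtx{T}\mtx{K}) - \trace(\mtx{T}\log\mtx{T}) + \trace(\mtx{T}) \Bigr].
\]
This is a form of the Gibbs variational principle. I would prove it by a short calculus-of-variations argument: setting the derivative in $\mtx{T}$ to zero identifies the critical point $\mtx{T} = \exp(\mtx{K})$, which returns the value $\trace\exp(\mtx{K})$, and Klein's inequality confirms that this critical point is the global maximum.

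Next I would substitute $\mtx{K} = \mtx{H} + \log\mtx{A}$ into the formula. The crucial observation is that the terms carrying the dependence on $\mtx{A}$, together with the entropy term, reassemble into a relative entropy,
\[
\trace(\mtx{T}\log\mtx{A}) - \trace(\mtx{T}\log\mtx{T}) = - D(\mtx{T} \,\|\, \mtx{A}),
\]
so that
\[
\trace \exp(\mtx{H} + \log\mtx{A}) = \max_{\mtx{T}\succ 0}\, \Bigl[ \trace(\mtx{T}\mtx{H}) + \trace(\mtx{T}) - D(\mtx{T} \,\|\, \mtx{A}) \Bigr].
\]

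Finally I would assemble the concavity claim. For fixed $\mtx{A}$ the bracketed expression is affine in $\mtx{T}$, and as a function of the pair $(\mtx{T},\mtx{A})$ it is jointly concave: the terms $\trace(\mtx{T}\mtx{H}) + \trace(\mtx{T})$ are linear, while $-D(\mtx{T} \,\|\, \mtx{A})$ is jointly concave by Lindblad's theorem. Since the constraint $\mtx{T} \succ 0$ does not involve $\mtx{A}$, maximizing over $\mtx{T}$ produces a concave function of $\mtx{A}$, which is the desired conclusion. The main obstacle is the variational formula itself: one must check that the supremum is genuinely attained and that no irregularity intrudes at the boundary of the positive-definite cone, where $\log$ is singular, so that the passage to the joint maximization is legitimate. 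Granting that, the joint-convexity input and the partial-maximization step are routine.
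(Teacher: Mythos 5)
Your proposal is correct and follows essentially the same route as the paper: the Gibbs variational formula (certified by Klein's inequality, with the maximum genuinely attained at $\mtx{T} = \exp(\mtx{K})$, an interior point, so the attainment worry dissolves), the substitution $\mtx{K} = \mtx{H} + \log \mtx{A}$ to reassemble the bracket as a linear term minus a relative entropy, Lindblad's joint convexity, and the partial-maximization lemma. The only cosmetic difference is that your relative entropy omits the linear correction $\trace(\mtx{A} - \mtx{T})$ present in the paper's definition, which changes nothing since the two versions differ by a jointly linear term.
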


The most direct proof of Theorem~\ref{thm:lieb} is due to Epstein~\cite{Eps73:Remarks-Two}; see Ruskai's papers~\cite{Rus02:Inequalities-Quantum,Rus05:Erratum-Inequalities} for a condensed version of this argument.
Lieb's original proof %of Theorem~\ref{thm:lieb}
develops the concavity of the function~\eqref{eqn:trace-fn} as a corollary of another deep concavity theorem~\cite[Thm.~1]{Lie73:Convex-Trace}.  In fact, many convexity and concavity theorems for trace functions are equivalent with each other, in the sense that the mutual implications follow from relatively easy arguments.  See~\cite[\S5]{Lie73:Convex-Trace} and \cite[\S5]{CL08:Minkowski-Type} for discussion of this point.

The goal of this note is to demonstrate that a modicum of convex analysis allows us to derive Theorem~\ref{thm:lieb} directly from another major theorem, the joint convexity of the quantum relative entropy.  The literature contains several elegant, conceptual proofs of the latter result; for example, see~\cite{Eff09:Matrix-Convexity}.  These arguments now deliver Lieb's theorem as an easy corollary.

The author's interest in Theorem~\ref{thm:lieb} stems from its striking applications in random matrix theory; refer to the paper~\cite{Tro10:User-Friendly-TR} for a detailed discussion.  Researchers concerned with these developments may find the current approach to Lieb's theorem more transparent than earlier treatments.

The main ideas in our presentation are drawn from the work of Carlen and Lieb~\cite{CL08:Minkowski-Type}, so this dispatch does not contain a truly novel technique.  Nevertheless, this note should be valuable because it provides a geometric intuition for Theorem~\ref{thm:lieb} and connects it to another major result.

\section{Background}

Our argument rests on the properties of a function, called the quantum relative entropy, which can be interpreted as a measure of dissimilarity between two positive-definite matrices.

\begin{defn} %[Quantum Information Divergence]
Let $\mtx{X}, \mtx{Y}$ be positive-definite matrices.  The \term{quantum relative entropy} of $\mtx{X}$ with respect to $\mtx{Y}$ is defined as
$$
{\rm D}( \mtx{X}; \mtx{Y} ) := 
	\trace( \mtx{X} \log \mtx{X} - \mtx{X} \log \mtx{Y} - (\mtx{X} - \mtx{Y})).
$$
Other appellations for this function include \term{quantum information divergence} and \term{von Neumann divergence}.
\end{defn}

The quantum relative entropy has a nice geometric interpretation~\cite[\S2.2 and \S2.6]{DT07:Matrix-Nearness}.  Define the quantum entropy function $\phi(\mtx{X}) := \trace(\mtx{X} \log \mtx{X})$ for a positive-definite argument.  The divergence  ${\rm D}(\mtx{X}; \mtx{Y})$ can be viewed as the difference between $\phi(\mtx{X})$ and the best affine approximation of the entropy $\phi$ at the matrix $\mtx{Y}$.  That is,
$$
{\rm D}( \mtx{X}; \mtx{Y}) = \phi(\mtx{X}) - [\phi(\mtx{Y}) + \ip{ \grad \phi(\mtx{Y}) }{ \mtx{X} - \mtx{Y} }].
$$
The entropy $\phi$ is a strictly convex function, which implies that the affine approximation strictly underestimates $\phi$.  This observation yields the following result.

\begin{fact} \label{fact:QID-nonneg} %[Nonnegativity]
The quantum relative entropy is nonnegative:
$$
{\rm D}( \mtx{X}; \mtx{Y} ) \geq 0.
$$
Equality holds if and only if $\mtx{X} = \mtx{Y}$.
\end{fact}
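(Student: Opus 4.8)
The plan is to exploit the geometric identity already recorded, namely
$$
{\rm D}( \mtx{X}; \mtx{Y}) = \phi(\mtx{X}) - \bigl[\phi(\mtx{Y}) + \ip{ \grad \phi(\mtx{Y}) }{ \mtx{X} - \mtx{Y} }\bigr],
$$
which exhibits ${\rm D}(\mtx{X};\mtx{Y})$ as the gap between the entropy $\phi$ at $\mtx{X}$ and its first-order affine approximation built at $\mtx{Y}$. Everything then reduces to a single analytic input: $\phi(\mtx{X}) = \trace(\mtx{X}\log\mtx{X})$ is strictly convex and differentiable on the positive-definite cone. Granting this, a standard property of smooth convex functions finishes the job. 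For a differentiable convex $g$ on a convex domain one has $g(\mtx{X}) \geq g(\mtx{Y}) + \ip{\grad g(\mtx{Y})}{\mtx{X}-\mtx{Y}}$, and when $g$ is \emph{strictly} convex the inequality is strict unless $\mtx{X}=\mtx{Y}$, since equality would force $g$ to be affine along the nondegenerate segment joining the two points, which strict convexity forbids. Applying this with $g=\phi$ delivers both the nonnegativity ${\rm D}(\mtx{X};\mtx{Y})\geq 0$ and the equality characterization simultaneously.

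The substantive step, and the one I expect to be the main obstacle, is establishing the strict convexity of $\phi$. I would reduce it to the scalar situation: the real function $t\mapsto t\log t$ has second derivative $1/t > 0$ on $(0,\infty)$, hence is strictly convex there. The passage from this scalar fact to the trace function $\mtx{X}\mapsto\trace(\mtx{X}\log\mtx{X})$ is the classical statement that a (strictly) convex scalar $f$ induces a (strictly) convex map $\mtx{X}\mapsto\trace f(\mtx{X})$ on self-adjoint matrices, a statement equivalent to Klein's inequality, which also pins down its own equality case. All the care sits in the strictness: ordinary convexity of $\trace f$ is routine, but one must verify that when $f$ is strictly convex the only way to attain equality in the convexity inequality is $\mtx{X}=\mtx{Y}$, ruling out hidden flat directions arising from matrices that share eigenvectors.

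As a consistency check on the gradient invoked above, differentiating the entropy gives $\grad\phi(\mtx{Y}) = \log\mtx{Y} + \Id$, so that $\ip{\grad\phi(\mtx{Y})}{\mtx{X}-\mtx{Y}} = \trace[(\log\mtx{Y}+\Id)(\mtx{X}-\mtx{Y})]$; substituting this into the geometric identity and collecting terms via cyclicity of the trace recovers the defining expression $\trace(\mtx{X}\log\mtx{X} - \mtx{X}\log\mtx{Y} - (\mtx{X}-\mtx{Y}))$ for ${\rm D}(\mtx{X};\mtx{Y})$, confirming that the two formulations coincide. Thus the entire argument rests on the strict convexity of the quantum entropy together with the elementary tangent-line characterization of strict convexity.
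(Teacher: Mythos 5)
Your proposal is correct and follows essentially the same route as the paper, which likewise derives nonnegativity (and the equality case) from the fact that the strictly convex entropy $\phi(\mtx{X}) = \trace(\mtx{X}\log\mtx{X})$ strictly dominates its best affine approximation at $\mtx{Y}$. Your supporting reduction of strict convexity to the scalar function $t \mapsto t\log t$ is also the paper's acknowledged alternative, namely extending scalar functional relations to matrix trace functions via Klein's inequality.
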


\noindent
In quantum statistical mechanics, Fact~\ref{fact:QID-nonneg} is usually called \term{Klein's inequality}.  Another proof proceeds by showing that certain functional relations for scalars extend to matrix trace functions~\cite[\S2]{Pet94:Survey-Certain}.

The convexity properties of quantum relative entropy have paramount importance.  We require a major theorem, due to Lindblad~\cite[Lem.~2]{Lin74:Expectations-Entropy}, which encapsulates the difficulties of the proof.

\begin{fact}[Lindblad] \label{fact:QID-joint-convex} %[Joint Convexity]
The quantum relative entropy is a jointly convex function.  That is,
$$
%{\rm D}(\bar{\mtx{X}}; \bar{\mtx{Y}})
{\rm D}(t \mtx{X}_1 + (1-t) \mtx{X}_2; \, t\mtx{Y}_1 + (1-t) \mtx{Y}_2)
	\leq t \cdot {\rm D}(\mtx{X}_1; \mtx{Y}_1) + (1-t) \cdot {\rm D}(\mtx{X}_2; \mtx{Y}_2)
\quad\text{for $t \in [0,1]$},
$$
where $\mtx{X}_i$ and $\mtx{Y}_i$ are positive definite for $i = 1,2$.
\end{fact}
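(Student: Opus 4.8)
The plan is to realize the quantum relative entropy as a \emph{noncommutative perspective} of the scalar function $f(t) = t \log t$ and then to exploit the general principle that the perspective of an operator convex function is jointly operator convex. The reformulation pays off because it trades the entangled dependence of $\mtx{X}$ and $\mtx{Y}$ inside a single trace for a convexity statement about an operator-valued map evaluated at a single \emph{fixed} vector.

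First I would work in the Hilbert space of matrices under the trace inner product $\ip{\mtx{A}}{\mtx{B}} = \trace(\mtx{A}^{\adj} \mtx{B})$, and introduce the commuting positive-definite operators of left and right multiplication,
$$
\mathbf{L}_{\mtx{X}} : \mtx{Z} \longmapsto \mtx{X} \mtx{Z},
\qquad
\mathbf{R}_{\mtx{Y}} : \mtx{Z} \longmapsto \mtx{Z} \mtx{Y} .
$$
Since these commute and $\log \mathbf{L}_{\mtx{X}} = \mathbf{L}_{\log \mtx{X}}$ while $\log \mathbf{R}_{\mtx{Y}} = \mathbf{R}_{\log \mtx{Y}}$, the joint functional calculus gives $\mathbf{R}_{\mtx{Y}}\, f(\mathbf{L}_{\mtx{X}} \mathbf{R}_{\mtx{Y}}^{-1}) = \mathbf{L}_{\mtx{X}}(\log \mathbf{L}_{\mtx{X}} - \log \mathbf{R}_{\mtx{Y}})$. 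Applying this operator to the identity $\Id$ and pairing with $\Id$ yields
$$
\ip{\Id}{\ \mathbf{R}_{\mtx{Y}}\, f\!\left( \mathbf{L}_{\mtx{X}} \mathbf{R}_{\mtx{Y}}^{-1} \right) \Id}
= \trace(\mtx{X} \log \mtx{X} - \mtx{X} \log \mtx{Y})
= {\rm D}(\mtx{X}; \mtx{Y}) + \trace(\mtx{X} - \mtx{Y}).
$$
The trailing term $\trace(\mtx{X} - \mtx{Y})$ is linear, hence irrelevant for convexity, so it suffices to prove that the left-hand side is jointly convex in $(\mtx{X}, \mtx{Y})$.

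The heart of the argument, and the step I expect to be the main obstacle, is the perspective lemma: if $f$ is operator convex on $(0,\infty)$, then $(\mathbf{A}, \mathbf{B}) \mapsto \mathbf{B}^{1/2} f(\mathbf{B}^{-1/2} \mathbf{A} \mathbf{B}^{-1/2}) \mathbf{B}^{1/2}$ is jointly operator convex on pairs of positive-definite operators; for commuting $\mathbf{A}, \mathbf{B}$ this reduces to $\mathbf{B}\, f(\mathbf{A} \mathbf{B}^{-1})$, the form we need. To prove the lemma I would invoke L\"owner's integral representation, writing an operator convex $f$ as an affine function plus a superposition of the elementary functions $t \mapsto (t-1)^2/(t+s)$ with $s > 0$. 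The perspective of such a piece is the operator quadratic-over-linear form $(\mathbf{A}, \mathbf{B}) \mapsto (\mathbf{A} - \mathbf{B})(\mathbf{A} + s\mathbf{B})^{-1}(\mathbf{A} - \mathbf{B})$, which is jointly operator convex because $(\mathbf{C}, \mathbf{D}) \mapsto \mathbf{C}^{\adj} \mathbf{D}^{-1} \mathbf{C}$ is jointly operator convex (a Schur-complement fact) and its arguments are affine in $(\mathbf{A}, \mathbf{B})$. Integrating against the positive measure preserves convexity, and along the way one checks that $f(t) = t \log t$ is indeed operator convex.

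Finally I would transfer the conclusion back to matrices. The assignments $\mtx{X} \mapsto \mathbf{L}_{\mtx{X}}$ and $\mtx{Y} \mapsto \mathbf{R}_{\mtx{Y}}$ are linear, so the lemma makes $(\mtx{X}, \mtx{Y}) \mapsto \mathbf{R}_{\mtx{Y}}\, f(\mathbf{L}_{\mtx{X}} \mathbf{R}_{\mtx{Y}}^{-1})$ jointly operator convex; note that $\mathbf{L}_{\mtx{X}}$ and $\mathbf{R}_{\mtx{Y}}$ commute for every pair, so the reduced form of the lemma applies. Compressing by the fixed vector $\Id$ through the positive linear functional $\mathbf{T} \mapsto \ip{\Id}{\mathbf{T} \Id}$ preserves convexity and produces a scalar; discarding the linear term then gives exactly the joint convexity of ${\rm D}(\mtx{X}; \mtx{Y})$ claimed in Fact~\ref{fact:QID-joint-convex}.
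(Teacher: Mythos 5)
You should first note what you are comparing against: the paper does not prove Fact~\ref{fact:QID-joint-convex} at all. It imports the statement as a black box credited to Lindblad, remarks that it follows from Lieb's main concavity theorem (via Bhatia's account), and points to several independent proofs, singling out Effros' perspective-function argument as the most accessible. Your proposal is, in essence, a correct and self-contained execution of precisely that Effros route, so it fills in what the paper deliberately delegates to references rather than conflicting with it. The individual steps check out: $\mathbf{L}_{\mtx{X}}$ and $\mathbf{R}_{\mtx{Y}}$ are commuting positive-definite operators in the trace inner product, functional calculus passes through both ($f(\mathbf{R}_{\mtx{Y}}) = \mathbf{R}_{f(\mtx{Y})}$ because powers of a single $\mtx{Y}$ commute), the pairing with $\Id$ reproduces $\trace(\mtx{X}\log\mtx{X} - \mtx{X}\log\mtx{Y})$, the perspective of $(t-1)^2/(t+s)$ equals $(\mathbf{A}-\mathbf{B})(\mathbf{A}+s\mathbf{B})^{-1}(\mathbf{A}-\mathbf{B})$ (in fact even without commutativity, as a direct computation with $\mathbf{B}^{\pm 1/2}$ shows), and joint operator convexity of $(\mathbf{C},\mathbf{D}) \mapsto \mathbf{C}^{\adj}\mathbf{D}^{-1}\mathbf{C}$ is the standard Lieb--Ruskai Schur-complement fact, preserved under the affine substitutions and under integration against a positive measure. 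Two refinements would tighten your sketch: (i) the L\"owner-type representation generally carries endpoint terms, e.g.\ a $(t-1)^2$ piece arising as the $s \to \infty$ limit, whose perspective $(\mathbf{A}-\mathbf{B})\mathbf{B}^{-1}(\mathbf{A}-\mathbf{B})$ is handled by the same Schur argument; and (ii) for $f(t) = t\log t$ you do not need the general representation theorem at all, since the explicit identity
$$
t \log t \;=\; (t-1) \;+\; \int_0^\infty \frac{(t-1)^2}{t+s} \cdot \frac{s \idiff{s}}{(1+s)^2}
$$
can be verified by elementary calculus, and it simultaneously certifies the operator convexity of $t\log t$ that you promise to ``check.'' As for what each stance buys: the paper's black-box treatment keeps the note short and emphasizes its actual contribution (joint convexity $\Rightarrow$ Lieb's Theorem~\ref{thm:lieb}), but the derivation of Fact~\ref{fact:QID-joint-convex} it cites first routes through Lieb's original concavity theorem; your proof makes the entire development independent of that machinery, at the modest cost of the quadratic-over-linear lemma and one integral representation.
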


\noindent
Fact~\ref{fact:QID-joint-convex} follows easily from Lieb's main concavity theorem~\cite[Thm.~1]{Lie73:Convex-Trace}; Bhatia's book~\cite[\S IX.6 and Prob.~IX.8.17]{Bha97:Matrix-Analysis} offers a clear account of this approach.  The literature contains several other elegant proofs; %of Fact~\ref{fact:QID-joint-convex};
see the papers~\cite{Uhl77:Relative-Entropy,PW78:Form-Convex,And79:Concavity-Certain,Han06:Extensions-Liebs}.  We single out Effros' work~\cite{Eff09:Matrix-Convexity} because it is accessible to researchers with experience in matrix theory and convex analysis.

Our final tool is a basic result from convex analysis which ensures that partial maximization of a concave function produces a concave function~\cite[Lem.~2.3]{CL08:Minkowski-Type}.  We include the simple proof.

\begin{prop} \label{prop:partial-max} %[Partial Maximization]
Let $f( \, \cdot \,;  \,\cdot \, )$ be a jointly concave function.  Then the function $y \mapsto \max\nolimits_{x} f(x; y)$ obtained by partial maximization is concave, assuming the maximum is always attained.
\end{prop}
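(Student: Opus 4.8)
The plan is to verify concavity of $g(y) := \max_{x} f(x; y)$ directly from the definition, using the attainment hypothesis to convert the abstract maximization into a concrete choice of maximizers. Fix two points $y_1, y_2$ in the domain and a parameter $t \in [0,1]$. Because the maximum is always attained, I can select points $x_1$ and $x_2$ with $g(y_1) = f(x_1; y_1)$ and $g(y_2) = f(x_2; y_2)$. The goal is to show that $g(t y_1 + (1-t) y_2) \geq t \, g(y_1) + (1-t) \, g(y_2)$.

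The key step is to produce a single good feasible point for the maximization at the averaged argument $t y_1 + (1-t) y_2$. The natural candidate is the convex combination $t x_1 + (1-t) x_2$ of the two maximizers. Since $g(t y_1 + (1-t) y_2)$ is a maximum over all $x$, evaluating the objective at this particular candidate yields the lower bound
$$
g(t y_1 + (1-t) y_2) \geq f\bigl( t x_1 + (1-t) x_2; \ t y_1 + (1-t) y_2 \bigr).
$$
Now I would invoke the joint concavity of $f$ in the pair of arguments simultaneously, which gives
$$
f\bigl( t x_1 + (1-t) x_2; \ t y_1 + (1-t) y_2 \bigr)
	\geq t \, f(x_1; y_1) + (1-t) \, f(x_2; y_2)
	= t \, g(y_1) + (1-t) \, g(y_2).
$$
Chaining the two inequalities delivers exactly the desired concavity inequality for $g$.

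There is no real analytic obstacle here; the argument is essentially a one-line application of joint concavity to a well-chosen test point. The only place the hypotheses must be handled with care is the attainment assumption: it is what lets me name actual maximizers $x_1, x_2$ rather than work with a supremum. Without attainment, one would instead fix $\eps > 0$, choose near-maximizers satisfying $f(x_i; y_i) \geq g(y_i) - \eps$, run the identical chain of inequalities, and then send $\eps \to 0$. Since the proposition explicitly grants that the maximum is achieved, this complication is avoided and the proof reduces to the clean three-step estimate above.
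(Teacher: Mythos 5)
Your proposal is correct and follows exactly the paper's own argument: select maximizers $x_1, x_2$ via the attainment hypothesis, lower-bound the partial maximum at $t y_1 + (1-t) y_2$ by evaluating $f$ at the test point $t x_1 + (1-t) x_2$, and apply joint concavity. The remark about handling the non-attained case with $\eps$-near-maximizers is a minor (and accurate) addition beyond what the paper states.
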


\begin{proof}
For each pair of points $y_1$ and $y_2$, there are points $x_1$ and $x_2$ that satisfy
$$
f(x_1; y_1) = \max\nolimits_x f(x; y_1)
\quad\text{and}\quad
f(x_2; y_2) = \max\nolimits_x f(x; y_2).
$$
For each $t \in [0,1]$, the joint concavity of $f$ implies that
\begin{align*}
\max\nolimits_x f(x; ty_1 + (1-t)y_2)
	&\geq f(t x_1 + (1-t)x_2; ty_1 + (1-t)y_2) \\
	&\geq t \cdot f(x_1; y_1) + (1-t) \cdot f(x_2; y_2) \\
	&= t \cdot \max\nolimits_x f(x; y_1) + (1-t) \cdot \max\nolimits_x f(x; y_2).
\end{align*}
In words, the partial maximum is a concave function.
\end{proof}

\section{Proof of Lieb's Theorem}

We begin with a variational representation of the trace, which is a restatement of the fact that quantum relative entropy is nonnegative.  The symbol $\psdgt$ denotes the positive-definite order.

\begin{lemma}[Variational Formula for Trace] \label{lem:variation}
Let $\mtx{Y}$ be a positive-definite matrix.  Then
$$
\trace \mtx{Y} = \max_{\mtx{X} \psdgt \mtx{0}} \ \trace( \mtx{X} \log \mtx{Y} - \mtx{X} \log \mtx{X} + \mtx{X}).
$$
\end{lemma}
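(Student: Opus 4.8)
The plan is to read the claimed identity directly off the nonnegativity of the quantum relative entropy, i.e.\ Fact~\ref{fact:QID-nonneg}. Fix a positive-definite matrix $\mtx{Y}$. For an arbitrary positive-definite $\mtx{X}$, the definition of the divergence gives
$$
{\rm D}(\mtx{X}; \mtx{Y}) = \trace(\mtx{X}\log\mtx{X} - \mtx{X}\log\mtx{Y} - \mtx{X} + \mtx{Y}),
$$
so Fact~\ref{fact:QID-nonneg} asserts that this quantity is nonnegative. Isolating the term $\trace\mtx{Y}$ and transposing everything else yields the inequality
$$
\trace\mtx{Y} \geq \trace(\mtx{X}\log\mtx{Y} - \mtx{X}\log\mtx{X} + \mtx{X}),
$$
valid for every $\mtx{X}\psdgt\mtx{0}$.

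Since this bound holds uniformly over all positive-definite $\mtx{X}$, I would conclude that $\trace\mtx{Y}$ dominates the supremum of the right-hand side. To finish, I would invoke the equality clause of Fact~\ref{fact:QID-nonneg}: the divergence vanishes precisely when $\mtx{X} = \mtx{Y}$. Substituting $\mtx{X} = \mtx{Y}$ into the right-hand side produces $\trace(\mtx{Y}\log\mtx{Y} - \mtx{Y}\log\mtx{Y} + \mtx{Y}) = \trace\mtx{Y}$, so the bound is met with equality at $\mtx{X} = \mtx{Y}$. Hence the supremum is attained and equals $\trace\mtx{Y}$, which is exactly the claimed variational formula.

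There is essentially no obstacle here: the entire content of the lemma is a rearrangement of Klein's inequality, and the only point that deserves attention is confirming that the maximum, rather than merely a supremum, is achieved. This is guaranteed by the sharp equality condition in Fact~\ref{fact:QID-nonneg}, which pins down the optimizer as $\mtx{X} = \mtx{Y}$; no appeal to compactness or a separate existence argument is needed.
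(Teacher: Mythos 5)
Your proposal is correct and is essentially identical to the paper's own proof: both rearrange the nonnegativity of the quantum relative entropy (Klein's inequality, Fact~\ref{fact:QID-nonneg}) to obtain the upper bound $\trace \mtx{Y} \geq \trace(\mtx{X}\log\mtx{Y} - \mtx{X}\log\mtx{X} + \mtx{X})$ and then observe that equality holds at $\mtx{X} = \mtx{Y}$, so the maximum is attained there. Your added remark that the equality clause obviates any compactness or existence argument is a fair observation, but the substance matches the paper exactly.
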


\begin{proof}
Introduce the definition of the quantum relative entropy into Fact~\ref{fact:QID-nonneg} and rearrange to reach
$$
\trace \mtx{Y} \geq \trace( \mtx{X} \log \mtx{Y} - \mtx{X} \log \mtx{X} + \mtx{X}).
$$
When $\mtx{X} = \mtx{Y}$, both sides are equal, which yields the advertised result.
\end{proof}

The main result follows quickly using the variational formula and the other tools we have assembled.  As noted, the structure of this argument is parallel with the approach of Carlen--Lieb to another concavity theorem~\cite[Thm.~1.1]{CL08:Minkowski-Type}.

\begin{proof}[Proof of Theorem~\ref{thm:lieb}]
In the variational formula, Lemma~\ref{lem:variation}, select $\mtx{Y} = \exp(\mtx{H} + \log \mtx{A})$ to obtain
$$
\trace \exp( \mtx{H} + \log \mtx{A} )
	= \max_{\mtx{X} \psdgt \mtx{0}} \
	\trace(\mtx{X}(\mtx{H} + \log \mtx{A}) - \mtx{X} \log \mtx{X} + \mtx{X})
$$
The latter expression can be written compactly using the quantum relative entropy:
\begin{equation} \label{eqn:tr-exp-div}
\trace \exp( \mtx{H} + \log \mtx{A} )
	= \max_{\mtx{X} \psdgt \mtx{0}} \
	[ \trace(\mtx{XH}) - ({\rm D}(\mtx{X}; \mtx{A}) -  \trace \mtx{A})] 
\end{equation}
For each self-adjoint matrix $\mtx{H}$, Fact~\ref{fact:QID-joint-convex} implies that the bracket is a jointly concave function of the variables $\mtx{A}$ and $\mtx{X}$.  It follows from Proposition~\ref{prop:partial-max} that the right-hand side of~\eqref{eqn:tr-exp-div} defines a concave function of $\mtx{A}$.  This observation establishes the theorem.
\end{proof}

\begin{rem}
The expression~\eqref{eqn:tr-exp-div} states that the function $f : \mtx{H} \mapsto \trace \exp(\mtx{H} + \log \mtx{A})$ is the Fenchel conjugate of ${\rm D}( \,\cdot\,; \mtx{A} ) - \trace \mtx{A}$.  This observation implies that $f$ is convex.
\end{rem}

%This approach yields several other interesting---but more elementary---results for free.
%
%\begin{cor}
%Fix a positive-definite matrix $\mtx{A}$.  The map
%$$
%\mtx{H} \longmapsto \trace \exp( \mtx{H} + \log \mtx{A} )
%$$
%is convex on self-adjoint matrices.
%\end{cor}
%
%\begin{proof}
%The identity~\ref{eqn:tr-exp-div} shows that this function is the Fenchel conjugate of ${\rm D}( \,\cdot\,; \mtx{A} ) - \trace \mtx{A}$, hence it must be convex.
%\end{proof}
%

\section*{Acknowledgments}

The author thanks Eric Carlen for a very illuminating discussion of matrix convexity theorems, including the paper~\cite{CL08:Minkowski-Type}, as well as comments on an early draft of this paper.  Edward Effros contributed insights on quantum information theory, and Elliott Lieb emphasized the equivalences among concavity theorems.  This work has been supported in part by ONR awards N00014-08-1-0883 and N00014-11-1-0025, AFOSR award FA9550-09-1-0643, and a Sloan Fellowship.  The research was performed while the author attended the IPAM Fall 2010 program on optimization.

\bibliographystyle{alpha}
\bibliography{lieb}

\end{document}